\newcommand{\tuple}[1]{\left<{#1}\right>}
\begin{document}

\title{Note on the Lower Bounds of Bimachines}
\author{Stefan Gerdjikov\inst{1,2}}

\institute{
Faculty of Mathematics and Informatics\\
Sofia University\\
5, James Borchier Blvd., Sofia 1164, Bulgaria\\
stefangerdzhikov@fmi.uni-sofia.bg
\and
Institute of Information and Communication Technologies\\
Bulgarian Academy of Sciences\\
25A, Acad. G. Bonchev Str., Sofia 1113, Bulgaria\\
stoyan@lml.bas.bg
}

\maketitle


\begin{abstract}
 This is a brief note on the lower bound of bimachines. Particularly, we report that
 there is a class of functional transducers with $O(n)$ states that do not admit a bimachine
 with fewer than $\Theta(2^n)$ states. 
\keywords{bimachines, transducers, rational functions}
\end{abstract}

\section{Introduction}
Finite state transducers and bimachines~\cite{Eil74,Sakarovitch09} are formalisms that describe exactly the class of rational functions. However, whereas the transducers  
bear the nondeterminism of the nondeterministic finite state automata, the bimachines provide a deterministic, strictly linear procedure to process a given input.
Thus, a natural question arises what the cost of the determinism is in terms of space. The classical construction,~\cite{Sch61,Berstel79,RS97}, given a functional transducer with $n$ states produces a bimachine with $O(n!)$ states. In~\cite{TCSSubmitted2017} we provided a construction that results in a bimachine with $O(2^n)$ states. In this report we show that there are hard instances of transducers with $O(n)$ states that require a bimachine with at least $2^n$ states.

\section{Transducers and Bimachines}\label{SecFormalPreliminaries}
For basic notions on alphabets and automata we refer to~\cite{Sakarovitch09,Eil74}. 

A \emph{finite state transducer} is ${\cal T}=\tuple{\Sigma\times\Omega^*,Q,I,F,\Delta}$ where $\Sigma$ and $\Omega$ are alphabets,
$Q$ is a final set of \emph{states}, $I,F\subseteq Q$ are the \emph{initial} and \emph{final} states, respectively, and $\Delta\subseteq Q\times(\Sigma\times \Omega^*)\times Q$
is a finite relation of \emph{transitions}.

Similarly to automata, it is instructive to consider a transducer as a graph with labels. The semantics of transducers is defined in terms of \emph{paths} in this graph.
A path in a transducer ${\cal T}=\tuple{\Sigma\times\Omega^*,Q,I,F,\Delta}$ is either (i) a non-empty sequence of transitions of the form:
\begin{equation*}
\pi= \tuple{p_0,\tuple{a_1,\omega_1},p_1}\dots \tuple{p_{n-1},\tuple{a_n,\omega_n},p_n} \text{ with } \tuple{p_i,\tuple{a_{i+1},\omega_{i+1}},p_{i+1}}\in \Delta,
\end{equation*} 
or (ii) $\pi=(p)$ where $p\in Q$. Each path $\pi$ features a \emph{source} state, $\sigma(\pi)$, a \emph{terminal} state, $\tau(\pi)$, a \emph{label}, $\lambda(\pi)$, and \emph{length}, $|\pi|$. These terms are defined as follows:
\begin{eqnarray*}
\text{case (i) } &\sigma(\pi)=p_0;\quad \tau(\pi)=p_n;\quad &\lambda(\pi)=\tuple{a_1\dots a_n,\beta_1\dots\beta_n}; \quad |\pi|=n\\
\text{case (ii) } & \sigma(\pi)=p;\quad \tau(\pi)=p;\quad &\lambda(\pi)=\tuple{\varepsilon,\varepsilon}; \quad |\pi|=0.\\
\end{eqnarray*}
With these notions, the language, i.e. \emph{rational relation}, recognised by a transducer ${\cal T}$ is:
\begin{equation*}
{\cal R}({\cal T})=\{\lambda(\pi)\,|\, \pi \text{ is a path with } \sigma(\pi)\in I \text{ and } \tau(\pi)\in F\}.
\end{equation*}

A transducer ${\cal T}$ is called \emph{functional} if ${\cal R}({\cal T})$ is a graph of a function $f_{\cal T}:\Sigma^*\rightarrow \Omega^*$.

\begin{definition}\label{DefBimachine}
A {\em bimachine} is a tuple ${\cal B}=\tuple{{\cal M},{\cal A}_L,{\cal A}_R,\psi}$, where:
\begin{itemize}
\item ${\cal A}_L = \tuple{\Sigma,L,s_L,L,\delta_L}$ and ${\cal A}_R = \tuple{\Sigma,R,s_R,R,\delta_R}$ are deterministic finite-state automata.
\item ${\cal M} = \tuple{M,\circ,e}$ is the output monoid and $\psi : (L \times \Sigma \times R) \rightarrow M$ is a partial function.
\end{itemize}
Note that all states of ${\cal A}_L$ and ${\cal A}_L$ are final. The function $\psi$ is naturally extended to the  {\em generalized output function}\index{Generalized output function of a bimachine} $\psi^\ast$ as follows:
\begin{itemize}
\item $\psi^\ast(l,\varepsilon,r)=e$ for all $l\in L, r\in R$;
\item $\psi^\ast(l,t\sigma,r)= \psi^\ast(l,t,\delta_R(r,\sigma)) \circ  \psi(\delta^\ast_L(l,t),\sigma,r)$ for $l\in L, r\in R, t\in\Sigma^\ast, \sigma\in\Sigma$.
\end{itemize}
The {\em function represented by the bimachine} is 
$$O_{\cal B}:\Sigma^\ast \rightarrow M: t  \mapsto \psi^\ast(s_L,t,s_R).$$
\end{definition}

 \section{Lower Bound}
Whereas the functionality property of transducers can be algorithmically tested,~\cite{BCPS03,Sch61}, they are non-deterministic devices and thus the processing of
the input depends on the size of the transducer. On the other hand the bimachines represent a deterministic device that also recognises functions. It turns out that both formalisms have the same expressive power,~\cite{Sch61,RS97}. However the classical constructions,~\cite{Sch61,RS97}, feature a worst case $2^{\Theta(n\log n)}$ blow-up 
in terms of states. Recently, in~\cite{TCSSubmitted2017} was reported a construction leading to $O(2^{|Q|})$ states and thus improving on the $2^{\Theta(n\log n)}$ bound.

In this section we shall address the problem of determining the lower bound of an algorithm that constructs a bimachine.
 For a positive integer $k>1$, we consider the alphabet $\Sigma_k=\{1,2,\dots,2k\}$ with $2k$ distinct characters.
For each $k$ we are going to construct a family of transducers ${\cal T}_n^{(k)}$ with $2k(n+1)$ states that do not admit an equivalent bimachine with less than $k^n +1$ states. In this sense we obtain a lower worst-case bound of $2^{\Theta(n)}$ for any construction that transforms a functional transducer into an equivalent bimachine.

Our approach is based on the classical example of regular languages that are hard to determinise. In particular, we the underlying
hard instances of rational functions are encoded in languages similar in spirit to $\{0,1\}^* 1\{0,1\}^{n-1}$.
  
Formally, for $1\le i\le k$ and $k+1\le j\le 2k$, we introduce the language:
\begin{equation*}
L^{(i,j)}_{n}= \{1,2,\dots, k\}^* i \{1,2,\dots, k\}^{n-1}\{k+1,\dots ,2k\}^{n-1}j \{k+1,\dots ,2k\}^*.
\end{equation*}
In words this is the language that has a block of characters from the first half of the alphabet followed by a block
with characters from the second half of the alphabet such that the last but
$n$-th character in the first block is $i$ and the $n$-th character in the second block is $j$.
Next we define the function $f^{(k)}_n:\Sigma_k^*\rightarrow \Sigma_k^*$ with domain the union of languages $L^{(i,j)}_n$:
\begin{eqnarray*}
f^{(k)}_n(\alpha) = \begin{cases}
ji \text{ if } \alpha \in L_n^{(i,j)} \text{ for } i\in \{1,2,\dots,k\}\text{ and } j\in \{k+1,\dots,2k\},\\
\neg ! \text{ otherwise.}
\end{cases}
\end{eqnarray*}

\begin{lemma}\label{lemma:transducer}
For each integer $k>0$ and $n\in \mathbb{N}$ there is a transducer with $2k(n+1)$ states representing $f^{(k)}_n$.
\end{lemma}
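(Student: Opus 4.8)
The plan is to build the transducer directly from the structure of the language $L^{(i,j)}_n$, guessing $i$ and $j$ nondeterministically at the start. I would partition the state set into $2k$ groups of $n+1$ states each: for each pair $(i,j)$ with $1\le i\le k$ and $k+1\le j\le 2k$ we allocate a chain of $n+1$ states, giving $2k$ chains (since there are $k^2$ pairs but only $2k$ ``commitments'' needed — actually I would reconsider and use $k$ chains indexed by $i$ for the first block together with $k$ chains indexed by $j$ for the second block, meeting at a shared pivot, which is exactly $2k$ chains of length $n+1$ and hence $2k(n+1)$ states).

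First I would describe the first-block gadget. The initial state carries a self-loop on every letter in $\{1,\dots,k\}$ with empty output (this absorbs the arbitrary prefix $\{1,\dots,k\}^*$). From the initial state, on reading letter $i$ we may branch into the $i$-th chain; this transition should output the suffix $i$ of the final output $ji$ — more precisely, I would emit $i$ on this transition (or defer it; the exact placement of outputs is a bookkeeping choice I would fix so that the total emitted string is $ji$). Along the $i$-th chain we read exactly $n-1$ further letters from $\{1,\dots,k\}$ with empty output, which enforces that the committed $i$ is the $(n)$-th-from-last letter of the first block. At the end of this chain we are at the boundary between the two blocks.

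Next I would describe the second-block gadget symmetrically: from the boundary state we read $n-1$ letters from $\{k+1,\dots,2k\}$ along a chain, then on reading $j$ we emit $j$ (the prefix of $ji$), then a final state carries a self-loop on every letter in $\{k+1,\dots,2k\}$ with empty output to absorb the suffix $\{k+1,\dots,2k\}^*$, and is accepting. The key correctness checks are: (1) every accepting path spells an input in some $L^{(i,j)}_n$ and outputs exactly $ji$ — this follows because the only nondeterministic choices are which $i$-chain to enter and which $j$-transition to take, and the chain lengths force the positional constraints; (2) conversely every word in $L^{(i,j)}_n$ has such a path; (3) the transducer is functional, which holds because for a word in $L^{(i,j)}_n$ the values $i$ and $j$ are uniquely determined by the word, so all accepting paths give output $ji$. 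I would then count: $2k$ chains, each contributing $n+1$ states (the branch state being shared appropriately), for a total of $2k(n+1)$, matching the claim.

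The main obstacle I anticipate is purely the careful accounting of where outputs are emitted and how the initial/final self-loop states are shared across the $2k$ chains, so that the state count comes out to exactly $2k(n+1)$ rather than something like $2k(n+1)+2$; this is a matter of merging the $k$ initial branch-points into one initial state and the $k$ final merge-points into one final state, and verifying no path is lost or spuriously added in the process. A secondary point requiring care is ensuring the boundary between the first and second block is unambiguous — but this is automatic since the two blocks use disjoint halves of the alphabet, so the first letter from $\{k+1,\dots,2k\}$ unambiguously marks the transition, and no $\varepsilon$-transitions are needed.
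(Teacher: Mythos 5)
Your construction is essentially the paper's: for each $i\in\{1,\dots,k\}$ a chain of $n+1$ states recognising $\{1,\dots,k\}^*\,i\,\{1,\dots,k\}^{n-1}$ with empty output, for each $j\in\{k+1,\dots,2k\}$ a chain of $n+1$ states recognising $\{k+1,\dots,2k\}^{n-1}\,j\,\{k+1,\dots,2k\}^*$, and connecting transitions from the end of each $i$-chain to the start of each $j$-chain. However, the point you dismiss as ``a bookkeeping choice'' --- where the output is emitted --- is the one place where this construction can actually go wrong, and your stated choice is incorrect. You propose to emit $i$ on the transition that reads $i$ (in the first block) and $j$ on the transition that reads $j$ (in the second block). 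Since $i$ is read strictly before $j$, the concatenation of outputs along such a path is $ij$, whereas $f^{(k)}_n$ requires $ji$. The alternative you mention, deferring the emission of $i$ until after $j$ has been emitted, does not work within your state budget: once the path has left the $i$-chain, no state records $i$ any longer, so the second-block chains would have to be duplicated $k$-fold to remember it, giving $k^2(n+1)$ rather than $k(n+1)$ states for the second block.

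The correct resolution, and the one the paper uses, is to emit the entire word $ji$ on the single connecting transition between the final state of the $i$-chain and the initial state of the $j$-chain: at that transition the source state determines $i$, the nondeterministically chosen target chain determines $j$, and the guess of $j$ is subsequently verified by the $j$-chain. The paper realises this with a connecting transition labelled $\tuple{\varepsilon,ji}$; you could equally attach the output $ji$ to the transition reading the first letter of the second block, which avoids input-$\varepsilon$ moves altogether and fits the paper's transition relation $\Delta\subseteq Q\times(\Sigma\times\Omega^*)\times Q$ literally. Your remaining concerns --- merging initial and final self-loop states --- only reduce the count below $2k(n+1)$ (cf.\ Remark~\ref{remark:simple}) and are harmless. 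But as written your transducer computes $ij$ instead of $ji$, so the output placement must be repaired as above before the proof is complete.
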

\begin{proof}[Sketch] Consider the regular expressions:
Let $\Sigma_k'=\{1,2,\dots,k\}$ and $\Sigma_k''=\{k+1,\dots,2k\}$, then:
\begin{eqnarray*}
R'_i &=&(\Sigma_k'\times\{\varepsilon\})^*\tuple{i,\varepsilon}(\Sigma_k'\times\{\varepsilon\})^{n-1} \text{ for } i\in \Sigma_k' \text{ and }\\
R''_i &=&(\Sigma_k''\times\{\varepsilon\})^{n-1}\tuple{i,\varepsilon}(\Sigma_k'\times\{\varepsilon\})^* \text{ for } j\in \Sigma_k''.
\end{eqnarray*}
\begin{figure}
\includegraphics[width=1\textwidth]{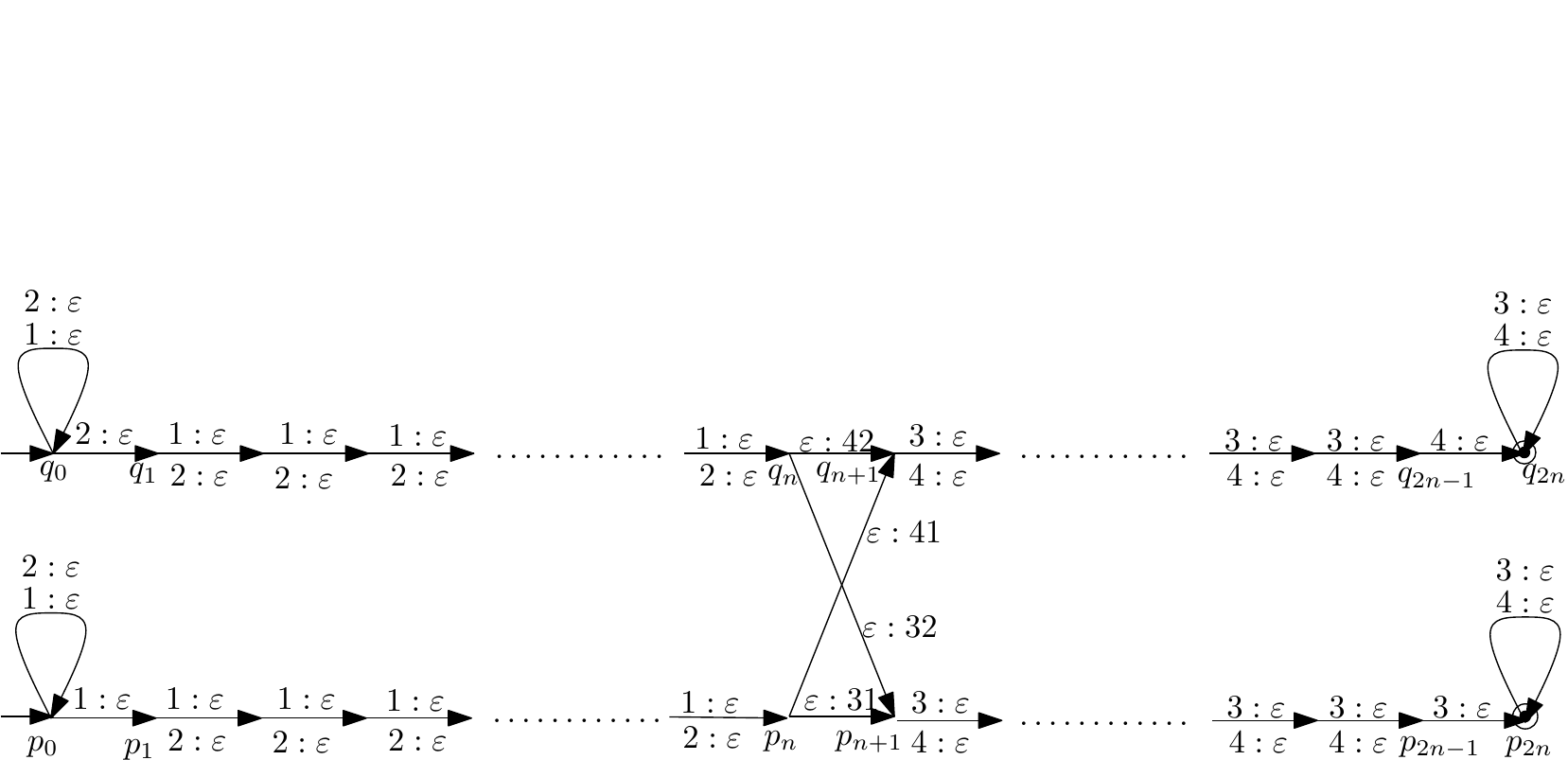}
\caption{A transducer representing the function $f^{(2)}_n$.}\label{fig:LowerBound}
\end{figure}
For each of them we can construct a transducer with $n+1$ states amounting to total of $2k(n+1)$ states, see Figure~\ref{fig:LowerBound}. Furthermore, it is straightforward that:
\begin{equation*}
f^{(k)}_n = \bigcup_{i\in\Sigma_k'}\bigcup_{j\in\Sigma_k''} R'_i \{\tuple{\varepsilon,ji}\} R''_j.
\end{equation*}
We can easily express these operation by laying $\{\tuple{\varepsilon,ji}\}$ transitions between the final state of $R'_i$ and the initial state of $R''_j$.

\end{proof}

\begin{remark}\label{remark:simple}
One can easily merge the initial states so that instead of $k$ initial states the transducer has only $1$. Similarly, one can
collapse the final states in a single one. This results in a transducer with total $2kn+2$ states.
\end{remark}

\begin{lemma}\label{lemma:bimachine}
For $k\ge 2$, every bimachine representing $f^{(k)}_n$ has at least $k^n+1$ states in total.
\end{lemma}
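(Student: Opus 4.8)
The plan is to prove the sharper dichotomy: \emph{if $\mathcal{A}_L$ has fewer than $k^n$ states, then $\mathcal{A}_R$ has at least $k^n$ states}. Since a deterministic automaton has at least one state, this gives $|L|+|R|\ge k^n+1$ in all cases. Two preliminary observations are used throughout. First, writing $\Sigma_k'=\{1,\dots,k\}$ and $\Sigma_k''=\{k+1,\dots,2k\}$, one checks that $\Dom(f^{(k)}_n)=\{AB\mid A\in\Sigma_k'^{*},\,B\in\Sigma_k''^{*},\,|A|\ge n,\,|B|\ge n\}$, the split $w=AB$ into a maximal first and a maximal second block being unique, and that $f^{(k)}_n(AB)$ is the two‑letter word whose first letter is the $n$-th letter of $B$ and whose second letter is the $n$-th-from-last letter of $A$. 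Second, unfolding Definition~\ref{DefBimachine}, for $w=\sigma_1\cdots\sigma_m\in\Dom(f^{(k)}_n)$ one has $O_{\cal B}(w)=\psi(\ell_1,\sigma_1,r_1)\circ\cdots\circ\psi(\ell_m,\sigma_m,r_m)$ with $\ell_p=\delta_L^{*}(s_L,\sigma_1\cdots\sigma_{p-1})$ and $r_p=\delta_R^{*}(s_R,\sigma_m\cdots\sigma_{p+1})$, the factor for the smallest index appearing on the left.

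Assume now $|L|<k^n=|\Sigma_k'^{\,n}|$. By the pigeonhole principle there are distinct $u,u'\in\Sigma_k'^{\,n}$ with $\delta_L^{*}(s_L,u)=\delta_L^{*}(s_L,u')$; fix a coordinate $q\in\{1,\dots,n\}$ with $u_q\ne u_q'$ and fix any letter $a\in\Sigma_k'$. For every $Q\in\Sigma_k''^{*}$ with $|Q|\ge n$ put $\alpha(Q)=u\,a^{q-1}Q$ and $\alpha'(Q)=u'\,a^{q-1}Q$. Both lie in $\Dom(f^{(k)}_n)$; their first block has length $n+q-1$, whose $n$-th-from-last letter is $u_q$, resp.\ $u_q'$, so $O_{\cal B}(\alpha(Q))=Q_nu_q$ and $O_{\cal B}(\alpha'(Q))=Q_nu_q'$, where $Q_n$ is the $n$-th letter of $Q$.

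The heart of the argument is to show that a small $\mathcal{A}_L$ forces the \emph{entire} output of $\alpha(Q)$ to be produced already on positions $1,\dots,n$. Comparing the runs on $\alpha(Q)$ and $\alpha'(Q)$: on any position $p>n$ the left state is $\delta_L^{*}(\delta_L^{*}(s_L,u),\cdot)=\delta_L^{*}(\delta_L^{*}(s_L,u'),\cdot)$ and hence the same, while the input letter and the right state there are literally identical for the two words; so positions $p>n$ contribute the same factor $C$ to both outputs. But $Q_nu_q$ and $Q_nu_q'$ have only $\varepsilon$ as a common suffix (their last letters lie in $\Sigma_k'$ and differ, their first letter lies in $\Sigma_k''$), so $C=\varepsilon$, and therefore $\psi(\ell_1,\sigma_1,r_1)\circ\cdots\circ\psi(\ell_n,\sigma_n,r_n)=Q_nu_q$ on the run on $\alpha(Q)$. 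On positions $p\le n$, however, $\ell_p=\delta_L^{*}(s_L,u_1\cdots u_{p-1})$ and $\sigma_p=u_p$ do not depend on $Q$, and the right state is $\delta_R^{*}(h(Q),\xi_p)$ where $h(Q):=\delta_R^{*}(s_R,\widetilde Q)$ (with $\widetilde{\;\cdot\;}$ denoting reversal) and $\xi_p:=a^{q-1}u_n u_{n-1}\cdots u_{p+1}$ is a fixed word. Hence the product $\psi(\ell_1,\sigma_1,r_1)\circ\cdots\circ\psi(\ell_n,\sigma_n,r_n)$ is a function of $h(Q)$ alone, and since it equals $Q_nu_q$ with $u_q$ fixed, the state $h(Q)$ determines $Q_n$ for every admissible $Q$. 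This is the step I expect to be the obstacle, and it is exactly where the combinatorial design of $f^{(k)}_n$ — a left‑hard factor concatenated, \emph{in this order}, with a right‑hard factor — is used: without the padding $a^{q-1}$ inside the first block and the unbounded tail of $Q$, a bimachine could legitimately emit both output letters at trivial positions and no bound would follow.

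Finally I would deduce $|R|\ge k^n$ by showing $h$ is injective on $\Sigma_k''^{\,n}$. Suppose $h(Q^{(1)})=h(Q^{(2)})$ for distinct $Q^{(1)},Q^{(2)}\in\Sigma_k''^{\,n}$. Since $h(bQ)=\delta_R(h(Q),b)$ for a single letter $b$, prepending any $b_1\cdots b_s\in\Sigma_k''^{\,s}$ leaves the $h$-values equal; applying the previous paragraph to the admissible words $b_1\cdots b_sQ^{(i)}$ (of length $n+s\ge n$) shows their $n$-th letters coincide, i.e.\ $Q^{(1)}_{n-s}=Q^{(2)}_{n-s}$, for $s=0,1,\dots,n-1$. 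Hence $Q^{(1)}=Q^{(2)}$, a contradiction. Thus $\mathcal{A}_R$ has at least $|\Sigma_k''^{\,n}|=k^n$ states, and together with $|L|\ge1$ this yields the claimed bound $|L|+|R|\ge k^n+1$.
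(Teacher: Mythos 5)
Your proof is correct. The first half is essentially the paper's own argument: collide two words $u\ne u'$ of length $n$ over $\{1,\dots,k\}$ in $\mathcal{A}_L$, pad so that the position where they differ becomes the $n$-th-from-last letter of the first block (the paper pads with a word $x$ of length $|\alpha|$ to the same effect), and conclude from the fact that the two outputs $Q_nu_q$ and $Q_nu_q'$ share no nonempty suffix that every factor emitted after the collision point is $\varepsilon$. Where you genuinely diverge is in how the lower bound on $\mathcal{A}_R$ is then extracted. The paper stays symmetric: it assumes $\mathcal{A}_R$ is also small, runs the mirror-image collision argument to kill the output on the first block, and evaluates the bimachine on a single word $\alpha_1\beta_3$ whose output would have to be simultaneously $\varepsilon$ and a two-letter word. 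You instead keep the argument one-sided: since the entire output $Q_nu_q$ is forced onto positions $1,\dots,n$, where the only $Q$-dependent quantity is the state $h(Q)=\delta_R^{*}(s_R,\widetilde{Q})$, that state must determine the $n$-th letter of $Q$; your prepending trick ($h(bQ)=\delta_R(h(Q),b)$) then upgrades this to injectivity of $h$ on $\{k+1,\dots,2k\}^{n}$, i.e.\ an explicit family of $k^n$ pairwise distinct right states. Both routes are sound and of comparable length; the paper's needs only one final evaluation but requires verifying the symmetric claim for $\mathcal{A}_R$, while yours is a genuine fooling-set argument that never touches a collision on the right automaton and makes the dichotomy (small $\mathcal{A}_L$ forces large $\mathcal{A}_R$) fully constructive.
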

\begin{proof}
Let ${\cal B}=\tuple{\Sigma^*,{\cal A}_L,{\cal A}_R,\psi}$ be a bimachine such that ${\cal O}_{\cal B}=f_n$.
Let ${\cal A}_L=\tuple{\Sigma,Q_L,s_L,Q_L,\delta_L}$ and ${\cal A}_R=\tuple{\Sigma,Q_R,s_R,Q_R,\delta_R}$.
We prove that $|Q_L|\ge k^n$ or $|Q_R|\ge k^n$.

Assume that both inequalities failed, thus $|Q_L|<k^n$ and $|Q_R|<k^n$. Hence, there are distinct words $\alpha',\alpha''\in \{1,2,\dots,k\}^n$
such that:
\begin{equation*}
p'_L=\delta_L^*(s_L,\alpha')=\delta_L^*(s_L,\alpha'').
\end{equation*}
Without loss of generality we may assume that $\alpha'=\alpha 1 x'$ and $\alpha''=\alpha 2 x''$. Let $x\in \{1,2,\dots,k\}^*$ be of length $|\alpha|$. 
Let $\alpha_1=\alpha 1 x'x$ and $\alpha_2 = \alpha 2 x'' x$. Note that $|x'x|=|x''x|=n-1$. We define $p_L$ as:
\begin{equation*}
p_L=\delta_L^*(p'_L,x).
\end{equation*}
Let $\gamma\in \{k+1,\dots,2k\}^*$ be of length $|\gamma|=m\ge n$. Let $\gamma=c_1c_2\dots c_m$ and set:
\begin{equation*}
p_L^{(l)}=\delta_L^*(p_L,c_1\dots c_l) \text{ and } q_R^{(l)} =\delta_R^*(s_R,c_m\dots c_{l+1}).
\end{equation*} 
We claim that $\Psi(p_L^{(l)},c_{l+1},q_R^{(l)})=\varepsilon$. Indeed, since $\alpha_i\in \{1,2,\dots,k\}^* i \{1,2,\dots,k\}^{n-1}$ and $|\gamma|\ge n$
we have that $\alpha_i\gamma$ is in the domain of $f^{(k)}_n$. Therefore $\Psi(p_L^{(l)},c_{l+1},q_R^{(l)})$ are defined for each $l$.
If $\Psi(p_L^{(l)},c_{l+1},q_R^{(l)})\neq \varepsilon$ for some $l$, then we can consider the last $l'$ such that $\Psi(p_L^{(l')},c_{l'+1},q_R^{(l')})\neq \varepsilon$.
Thus, the last character of $\Psi(p_L^{(l')},c_{l'+1},q_R^{(l')})$ is the last character of both: ${\cal O}_{\cal B}(\alpha_1\gamma)$ and ${\cal O}_{\cal B}(\alpha_2\gamma)$.
Since ${\cal O}_{\cal B}(\alpha_1\gamma)=f^{(k)}_n(\alpha_1\gamma)$ its last character is $1$. Similarly, the last character of ${\cal O}_{\cal B}(\alpha_2\gamma)=f_n^{(k)}(\alpha_2\gamma)$ is $2$. Contradiction!

Applying a symmetric argument with respect to ${\cal A}_R$ we construct words $\beta_3=y' a\beta$ and $\beta_4=y'' b \beta$ of length $n$ such that $a,b\in \{k+1,\dots,2k\}$ are distinct and further:
\begin{equation*}
p_R' =\delta_R^*(s_R,\beta_3^{rev})=\delta_R^*(s_R,\beta_4^{rev}).
\end{equation*} 
We pick an arbitrary word $y\in \{k+1,\dots,2k\}^*$ of length $|y|=|\beta|$ and define:
\begin{equation*}
p_R =\delta_R^*(p'_R,y^{rev}).
\end{equation*} 
As above we see that for every word $a_1\dots a_m\in \{1,2,\dots,k\}^*$ of length $m\ge n$ setting:
\begin{equation*}
q_L^{(l)}=\delta_L^*(s_L,a_1\dots a_l) \text{ and } p_R^{(l)}=\delta_R^*(p_R,a_{l+1}\dots a_m)
\end{equation*}
 it holds that $\Psi(q_L^{(l)},a_{l+1},p_R^{(l)})=\varepsilon$.
 
Applying this observation for $\alpha_1\beta_3$ we obtain that ${\cal O}_{\cal{B}}(\alpha_1\beta_3)=\varepsilon$. However $f^{(k)}_n(\alpha_1\beta_3)=a1$, a contradiction.

Therefore, $|Q_L|\ge k^n$ or $|Q_R|\ge k^n$. Since $|Q_L|\ge 1$ and $|Q_R|\ge 1$, the result follows. \qed
\end{proof}

\begin{corollary}
There is a family of rational functions $f_n$ that can be represented with $|Q_n|$ states but every bimachine representing $f_n$
requires at least $2^{c (|Q_n|-2)}$ where $c=\frac{\log_2 3}{6}$.
\end{corollary}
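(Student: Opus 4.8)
The plan is to combine Lemmas~\ref{lemma:transducer} and \ref{lemma:bimachine} directly, treating the corollary as a bookkeeping exercise in choosing the parameters $k$ and $n$ so as to maximise the exponent of the lower bound relative to the number of states of the transducer. By Lemma~\ref{lemma:transducer} (or, more economically, by Remark~\ref{remark:simple}) we have a transducer with $|Q_n| = 2kn+2$ states computing $f^{(k)}_n$, and by Lemma~\ref{lemma:bimachine} every bimachine for $f^{(k)}_n$ has at least $k^n+1$ states. So the task is to pick a fixed $k$ and then rename $f^{(k)}_n$ to $f_n$ (letting $n$ vary) so that $k^n$ is expressed as $2^{c(|Q_n|-2)}$ for the best possible constant $c$.

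First I would set, say, $k=2^s$ for an integer $s\ge 1$, so that $k^n = 2^{sn}$ and $|Q_n| - 2 = 2kn = 2^{s+1} n$, giving $k^n = 2^{sn} = 2^{\frac{s}{2^{s+1}}(|Q_n|-2)}$; hence $c = \frac{s}{2^{s+1}}$. Optimising the integer choice of $s$: $s=1$ gives $c = 1/4$, $s=2$ gives $c = 2/8 = 1/4$, and larger $s$ only decreases $c$, so the best power of two yields $c=1/4$. To get the stated constant $c = \frac{\log_2 3}{6} \approx 0.264$, which is slightly better, one should instead take $k=3$: then $k^n = 3^n = 2^{n\log_2 3}$ and $|Q_n| - 2 = 6n$, so $3^n = 2^{\frac{\log_2 3}{6}(|Q_n|-2)}$, i.e.\ exactly $c = \frac{\log_2 3}{6}$. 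This requires $k=3\ge 2$, which is within the hypothesis of Lemma~\ref{lemma:bimachine}.

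Putting it together: fix $k=3$, define $f_n := f^{(3)}_n$, let $|Q_n| := 6n+2$ be the number of states of the transducer from Remark~\ref{remark:simple} computing $f_n$. Lemma~\ref{lemma:bimachine} gives that any bimachine for $f_n$ has at least $3^n + 1 \ge 3^n$ states, and $3^n = 2^{n\log_2 3} = 2^{\frac{\log_2 3}{6}\cdot 6n} = 2^{c(|Q_n|-2)}$ with $c = \frac{\log_2 3}{6}$. This is the whole argument.

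The only real subtlety — hardly an obstacle — is consistency between the "$k^n+1$ states'' of Lemma~\ref{lemma:bimachine} and the "$2^{c(|Q_n|-2)}$'' of the corollary: one must note $3^n+1 > 3^n$ so the clean power-of-two bound is implied, and one must be careful that the transducer size used is $6n+2$ (the merged version from Remark~\ref{remark:simple}), not $2k(n+1) = 6n+6$, since the "$-2$'' in the exponent precisely accounts for the two merged initial/final states. If instead one prefers to invoke Lemma~\ref{lemma:transducer} verbatim with $2k(n+1)=6n+6$ states, the bound would read $2^{c(|Q_n|-6)}$; the cleaner "$-2$'' form matches Remark~\ref{remark:simple}, so I would cite that remark explicitly.
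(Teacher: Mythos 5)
Your proposal is correct and follows essentially the same route as the paper: both invoke Remark~\ref{remark:simple} to get a transducer with $|Q_n|=2kn+2$ states, apply Lemma~\ref{lemma:bimachine} for the $k^n+1$ lower bound, rewrite $k^n=2^{\frac{\log_2 k}{2k}(|Q_n|-2)}$, and observe that $c_k=\frac{\log_2 k}{2k}$ is maximised at $k=3$, yielding $c=\frac{\log_2 3}{6}$. Your explicit check that powers of two only achieve $c=1/4$ is a nice sanity check but adds nothing beyond the paper's optimisation over all $k\ge 2$.
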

\begin{proof}
In view of Lemma~\ref{lemma:transducer}, Remark~\ref{remark:simple}, and Lemma~\ref{lemma:bimachine} we have that for
every $k\ge 2$, the functions $f_n^{(k)}$ can be represented by a transducer with $|Q_n|=2kn+2$ states and requires at least $k^n+1$ in terms
of bimachines. Expressing:
\begin{equation*}
k^n+1=2^{n\log_2 k }+1=2^{\frac{|Q_n| - 2}{2k}\log_2 k} +1
\end{equation*}
and setting $c_k=\frac{\log_2 k}{2k}$ we get that each bimachine representing $f_n^{(k)}$ requires at least $2^{c_k (|Q_n|-2)}$ states where $|Q_n|$
is the size of the minimal transducer representing $f_n^{(k)}$. It is easy to see that $c_k$ is maximised for $k=3$ and in this case we obtain:
\begin{equation*}
c_3=\frac{\log_2 3}{6}.
\end{equation*} 
\end{proof}
\section{Conclusion}\label{SecConclusion}

In this paper we considered the lower bound problem for bimachine construction. We showed that every construction of a bimachine
out of a transducer features a worst case state requirements $O(2^{n \frac{\log_2 3}{6}})$. This comes close to the construction $O(2^n)$
that we have from~\cite{TCSSubmitted2017}. Yet, it remains open whether we can improve the construction of a bimachine, say to $O(2^{n/2})$,
or there are still harder instances for bimachine construction that show that $O(2^n)$ is tight.

\bibliographystyle{splncs03}
\bibliography{bibliography-2}

\end{document}